\documentclass{article}[12pt]
\usepackage{amsmath,graphicx,amsthm,amssymb,mathrsfs,booktabs,mdwlist,multirow,colortbl,bm}
\usepackage{algorithmic,algorithm,authblk}
\usepackage[top=1in, bottom=1in, left=1in, right=1in]{geometry}

%%%%%%%%%%%%%%%%%%%%%%%%%%%%%%%%
% THEOREMS
%%%%%%%%%%%%%%%%%%%%%%%%%%%%%%%%
\theoremstyle{plain}
\newtheorem{theorem}{Theorem}[section]
\newtheorem{proposition}[theorem]{Proposition}
\newtheorem{lemma}[theorem]{Lemma}

\theoremstyle{definition}
\newtheorem{definition}[theorem]{Definition}
\newtheorem{assumption}[theorem]{Assumption}
\newtheorem{example}[theorem]{Example}
\theoremstyle{remark}
\newtheorem{remark}[theorem]{Remark}

% Todonotes is useful during development; simply uncomment the next line
%    and comment out the line below the next line to turn off comments
%\usepackage[disable,textsize=tiny]{todonotes}
\usepackage[textsize=tiny]{todonotes}

% The \icmltitle you define below is probably too long as a header.
% Therefore, a short form for the running title is supplied here:

\begin{document}
\title{Efficient Distributed Learning in Stochastic Non-cooperative Games\\
	       without Information Exchange}
\author[a,b]{Haidong Li}
\author[a,b]{Anzhi Sheng}
\author[c]{Yijie Peng \thanks{Corresponding author: pengyijie@pku.edu.cn}}
\author[a,b]{Long Wang \thanks{Corresponding author: longwang@pku.edu.cn}}
\affil[a]{Department of Advanced Manufacturing and Robotics, College of Engineering, Peking University, Beijing, China}
\affil[b]{Center for Systems and Control, College of Engineering, Peking University, Beijing, China}
\affil[c]{Guanghua School of Management, Peking University, Beijing, China}
\date{}
\maketitle

\begin{abstract}
In this work, we study stochastic non-cooperative games, where only noisy black-box function evaluations
are available to estimate the cost function for each player. Since each player's cost function depends on both its own decision variables and its rivals' decision variables, local information needs to be exchanged through a center/network in most existing work for seeking the Nash equilibrium. We propose a new stochastic distributed learning algorithm that does not require communications among players. The proposed algorithm uses simultaneous perturbation method to estimate the gradient of each cost function, and uses mirror descent method to search for the Nash equilibrium. We provide asymptotic analysis for the
bias and variance of gradient estimates, and show the proposed algorithm converges to the Nash equilibrium in mean square for the class of strictly monotone games at a rate faster than the existing algorithms. The effectiveness of the proposed method is buttressed in a numerical experiment.
\end{abstract}

\section{Introduction}
\label{sec:intro}
In game theory, the scenarios of non–cooperative game include that individuals autonomously optimize their
selfish objectives, e.g., congestion control~\cite{alpcan2002game}, Cournot oligopoly~\cite{bimpikis2019cournot}, and resource allocation auctions~\cite{marden2013distributed}. The individual cost function of each player depends on not only its own stratege but also its rivals' strategies. The Nash equilibrium is a solution of non–cooperative game~\cite{nash1950equilibrium}.

In general, players do not have complete information of the game and thus cannot compute a Nash equilibrium in an introspective manner~\cite{lu2018distributed}. For learning the Nash equilibrium, players need to take actions in repeated games, and adjust their strategies in response to their rivals’ strategies for seeking a lower cost. In most existing work, a center that has bidirectional communication with all players is required~\cite{belgioioso2021semi} or agents are allowed to exchange information with their neighbors in a network~\cite{pavel2019distributed,lu2020online}. Learning the Nash equilibrium without local information exchange is considered much more challenging and has not been well studied. In this paper, we study how each player can learn the Nash equilibrium based on only its own cost observations.

We consider stochastic non–cooperative games where each player only has access to a noisy but unbiased estimate of its cost function. Such stochastic setting is known as black-box or zeroth-order~\cite{ghadimi2013stochastic,nesterov2017random}. Stochastic noise complicates the problem considerably because each player must estimate the gradient of its cost function from the observed feedbacks. Stochastic gradient estimation may introduce a significant error in the Nash equilibrium learning process. In~\cite{mertikopoulos2019learning} and~\cite{lei2021distributed}, they assume each player can obtain an unbiased stochastic gradient estimate with a bounded variance, which is not available in our black-box setting. In this paper, we propose a black-box stochastic gradient estimation method and study the bias-variance tradeoff of the proposed method.

Specifically, we propose a simultaneous perturbation method for gradient estimation and mirror descent method for searching for the Nash equilibrium of stochastic non-cooperative games. Simultaneous perturbation method proposed by~\cite{spall1992multivariate} is a black-box gradient estimation method using only noisy outputs of function evaluations. This method has the advantage that the number of function evaluations required to estimate gradient is independent of the dimension of decision variables. Simultaneous perturbation method has been widely used in stochastic gradient-based optimization methods~\cite{robbins1951stochastic}. Mirror descent method, a popular class of no-regret policies for repeated games, can be traced back to~\cite{nemirovskij1983problem} and many variants have been developed~\cite{shalev2006convex,nesterov2009primal,xiao2010dual}. For convex cost function of the game, an $O(\sqrt{n})$ regret bound has been established for the mirror descent method in extensive literature of online learning and optimization~\cite{flammarion2017stochastic,shalev2011online,bubeck2012regret}.

The contribution of our work is threefold.
\begin{itemize}
	\item We propose a stochastic distributed learning algorithm that does not require communications among players. In each period, players perturb their strategies following simultaneous perturbation method to learn local function information; then at the end of each period, each player applies the mirror descent method with a stochastic gradient estimate to seek the Nash equilibrium. 
	\item We establish asymptotic property for the bias and variance of gradient estimates. With an appropriate bias-variance tradeoff, simultaneous perturbation method can efficiently learn gradient information without any knowledge of cost function structure. 
	\item We show the proposed algorithm converges in mean square to the Nash equilibrium for the class of strictly monotone games. Specifically, the rate of convergence of our method is between $O(n^{-1/2})$ and $O(n^{-1})$, which is faster than $O(n^{-1/3})$ in~\cite{bravo2018bandit}.
\end{itemize}

The rest of this paper is organized as follows. In Section~\ref{sec:PF}, we formulate the problem and introduce some assumptions. In Section~\ref{sec:SL}, we propose a stochastic distributed learning algorith and establish some theoretical results. Section~\ref{sec:NS} presents numerical simulation results, and Section~\ref{sec:con} concludes the paper.

\section{Problem Formulation}\label{sec:PF}
In this section, we define stochastic non-cooperative games and introduce some basic assumptions.

\subsection{Problem Statement}
We consider a class of stochastic non-cooperative games $\langle\mathcal{N},(X_{i})_{i=1}^{N},(f_{i})_{i=1}^{N}\rangle$ with a finite set of players $\mathcal{N}\triangleq\{1,\ldots,N\}$. Each player $i\in\mathcal{N}$ chooses its decision variables $x_{i}\in\mathbb{R}^{m_{i}}$ from a strategy set $X_{i}\subset\mathbb{R}^{m_{i}}$. We denote by $x\triangleq(x_{1}^\top,\ldots,x_{N}^\top)^\top\in\mathbb{R}^{d}$ the vector consisting of all decision variables, where $d=\sum_{i=1}^{N}m_{i}$. In non-cooperative games, the cost function of each player $i$ depends on its own decision variables $x_{i}$ and its rivals' decision variables $x_{-i}\triangleq(x_{1}^\top,\ldots,x_{i-1}^\top,x_{i+1}^\top,\ldots,x_{N}^\top)^\top$, i.e., $f_{i}(x_{i},x_{-i})$. The game is to solve the following optimization problem:
\begin{gather}\label{game}
	\forall i\in\mathcal{N}:\underset{x_{i}\in X_{i}}{\min}\ f_{i}(x_{i},x_{-i}).
\end{gather}
Then we aim to compute a Nash equilibrium (NE) of the non-cooperative game~\eqref{game}.
\begin{definition}
	A tuple of decision variables $x^*\in\prod_{i}X_{i}$ is a NE if, for each $i\in\mathcal{N}$,
	$$f_{i}(x_{i}^*,x_{-i}^*)\leq \inf\{f_{i}(x_{i},x_{-i}): x_{i}\in X_{i},x_{-i}=x_{-i}^*\}.$$
\end{definition}

In stochastic regimes, each player has access to a noisy but unbiased estimate $F_{i}(x_{i},x_{-i};\xi_{i})$ of its cost function for any chosen $x$, where the random vector $\xi_{i}\in \mathbb{R}^{\omega_{i}}$ represents the randomness in non-cooperative games, and $F_{i}:\mathbb{R}^{d}\times\mathbb{R}^{\omega_{i}}\to\mathbb{R}$ is a scalar-valued function, and $f_{i}(x_{i},x_{-i})=\mathbb{E}[F_{i}(x_{i},x_{-i};\xi_{i})]$. In other words, the structure of each $f_{i}$ is unknown, and only noisy observations $F_{i}(x_{i},x_{-i};\xi_{i})$ are available to evaluate the cost function $f_{i}(x_{i},x_{-i})$ for each player $i$. In addition, neither a center nor a network is considered in our problem. In other words, each player cannot acquire information of other players' decisions and cost functions.

\begin{remark}
	 We should distinguish two types of assumptions on the knowledge of the cost function in the literature. In multi-agent control theory, it is often assumed that the agents know the structure of their cost functions. Each player’s cost function depends on all players’ actions, but players cannot directly observe actions beyond their neighbors. Therefore, consensus algorithms are required for each player to estimate other players’ decisions. The estimate of cost function can be computed by substituting the estimate of all players’ actions. In the field of simulation optimization, it is often assumed that the structure of cost function is unknown, so players cannot obtain the value of their cost functions even if they know each other's action. A simulation model is used to generate unbiased but noisy function evaluations for estimating the value of cost function. In this study, we consider the second type of assumptions.
\end{remark}

\begin{example}\label{example}
(Cournot competition problem). There is a finite set of $N$ firms competing over $m$ markets denoted by $\mathcal{M}\triangleq\{1,\ldots,m\}$. Each firm $i$ supplies market $j\in\mathcal{M}$ with a quantity $x_{i,j}\in[0,C_{i,j}]$ of products (or service), where $x_{i,j}$ is the $j$-th component of $x_{i}$, and $C_{i,j}$ is the production capacity of firm $i$. By the law of supply and demand, the price $p_{j}$ of products sold in market $j$ is a decreasing function of the total supplying amount $\sum_{i=1}^{N}x_{i,j}$, e.g., $p_{j}(x;\zeta_{j})=a_{j}+\zeta_{j}-b_{j}\sum_{i=1}^{N}x_{i,j}$ with positive constants $a_{j},b_{j}>0$ and a zero-mean random disturbance $\zeta_{j}$. The stochastic cost function of firm $i$ is then given by $F_{i}(x_{i},x_{-i};\zeta,\eta_{i})=\sum_{j=1}^{m}(c_{i,j}+\eta_{i,j}-p_{j}(x;\zeta_{j}))x_{i,j}$ with positive constants $c_{i,j}>0$ and zero-mean random variables $\eta_{i,j}$. Firm $i$ aims to minimize its expected
cost while satisfying the finite capacity constraint.
\end{example}
For more examples of non-cooperative games, see~\cite{scutari2010convex} and~\cite{d2015interference}. 

\subsection{Assumptions}
\begin{assumption}\label{Assump:1}
	For each agent $i\in\mathcal{N}$,\\
	(a) the strategy set $X_{i}$ is closed, convex and compact;\\
	(b) the cost function $f_{i}(x_{i},x_{-i})$ is convex in $x_{i}$ for all $x_{-i}\in\prod_{j\neq i}X_{j}$;\\
	(c) $f_{i}(x_{i},x_{-i})$ is third-order continuously differentiable in $x$;\\
	(d) $\text{Var}[F_{i}(x_{i},x_{-i};\xi_{i})]$ for any $x\in\prod_{i}X_{i}$ is bounded.
\end{assumption}

We define $$\phi(x)\triangleq(\nabla_{x_{1}}f_{1}(x_{1},x_{-1})^\top,\ldots,\nabla_{x_{N}}f_{N}(x_{N},x_{-N})^\top)^{\top}.$$ Under Assumption~\ref{Assump:1}, by~\cite{facchinei2007finite}, $x^*\in\prod_{i}X_{i}$ is a NE if and only if for all $x\in\prod_{i}X_{i}$,
\begin{gather}
	(x-x^{*})^\top\phi(x^{*})\geq0.
\end{gather}
In addition, the existence of NE follows immediately by~\cite{facchinei2007finite}.
\begin{assumption}\label{Assump:2}
	The non-cooperative game is $\beta$-strongly monotone, i.e.,
	$$(\phi(x)-\phi(x'))^\top(x-x')\geq \frac{\beta}{2}\|x-x'\|^{2}$$ for any $x,x'\in \prod_{i}X_{i}$.
\end{assumption}
In this paper, all norms $\|\cdot\|$ are $L^2$ norms. 

\section{Stochastic Learning}\label{sec:SL}
In this section, we propose a stochastic distributed learning algorithm to learn local function information in each period and seek the Nash equilibrium at the end of each period. 

\subsection{Simultaneous Perturbation}
In the black-box setting, a common method for estimating a gradient is to use finite differences. High-dimensional finite-difference scheme perturbs the value of each component of $x$ separately while keeping the other components fixed. Suppose that $f:\mathbb{R}^{d}\to\mathbb{R}$ is a function of interest, where we have access to an unbiased estimate $\widehat{f}(x)$ for any chosen $x\in\mathbb{R}^{d}$. The $i$-th component of central finite-difference gradient estimator is given by
$$\frac{\widehat{f}(x+he_{i})-\widehat{f}(x-he_{i})}{2h},$$
where $h>0$ is a scaling parameter and $e_{i}$ denotes the unit vector in the $i$-th direction. The central finite-difference gradient estimator requires at least $2d$ function evaluations per gradient estimate.

Simultaneous perturbation method uses the same perturbation vector for estimating each component of gradient, which offers the advantage that the number of function evaluations required to estimate gradient is independent of the dimension of decision variables. Specifically, to estimate the gradient $\nabla f(x)$, simultaneous perturbation method gives us the following estimator:
\begin{gather}
	\widehat{\nabla}f(x)=\frac{1}{\ell}\sum_{j=1}^{\ell}\frac{\widehat{f}(x+h\Delta_{j})-\widehat{f}(x-h\Delta_{j})}{2h}\psi(\Delta_{j}),\label{eq:SP}
\end{gather}
where $\ell$ is the number of perturbing pairs and is independent of $d$, each $\Delta_{j}=(\delta_{1,j},\ldots,\delta_{d,j})^\top$ is a perturbation vector, and $\psi(\Delta_{j})=(1/\delta_{1,j},\ldots,1/\delta_{d,j})^\top$. As required in simultaneous perturbation, $\delta_{i,j},\ i=1,\ldots,d$ are all mutually independent with zero-mean, bounded second moments, and $\mathbb{E}(|\delta_{i,j}|^{-1})$ uniformly bounded, and all $\Delta_{j}$'s are independent and identically distributed. A common choice for $\delta_{i,j}$ is the Rademacher distribution, i.e., Bernoulli $\pm1$ with probability $1/2$. Given each $\Delta_{j}$, we generate $\widehat{f}(x+h\Delta_{j})$ and $\widehat{f}(x-h\Delta_{j})$ independently, and $\widehat{f}(x+h\Delta_{j})$'s and $\widehat{f}(x-h\Delta_{j})$'s are also independent across different $\Delta_{j}$'s. 

We define observation variance $\sigma_{f}^{2}(x)\triangleq \text{Var}[\widehat{f}(x)]>0$. The following theorem provides the asymptotic property of the bias and variance of gradient estimate $\widehat{\nabla}f(x)$.
\begin{theorem}\label{thm:mse}
	Suppose $f(x)$ is third-order continuously differentiable, and $\sigma_{f}^{2}(x)$ for any $x\in\prod_{i}X_{i}$ is bounded. Then as $h\to 0$ and $\ell\to\infty$, the order of bias of $\widehat{\nabla}f(x)$ satisfies $$\mathbb{E}[\widehat{\nabla}f(x)]-\nabla f(x)=O(h^2),$$ the order of variance of $\widehat{\nabla}f(x)$ satisfies $$\mathbb{E}\|\widehat{\nabla}f(x)-\mathbb{E}[\widehat{\nabla}f(x)]\|^{2}=O\left(\frac{1}{\ell h^2}\right),$$ and thus the order of mean squared error of $\widehat{\nabla}f(x)$ satisfies
	$$\mathbb{E}\|\widehat{\nabla}f(x)-\nabla f(x)\|^{2}=O(h^4)+O\left(\frac{1}{\ell h^2}\right).$$
\end{theorem}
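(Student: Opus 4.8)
The plan is to prove the three displayed bounds in sequence, observing at the outset that the mean-squared-error bound is an immediate consequence of the first two: by the bias--variance decomposition $\mathbb{E}\|\widehat{\nabla}f(x)-\nabla f(x)\|^{2}=\|\mathbb{E}[\widehat{\nabla}f(x)]-\nabla f(x)\|^{2}+\mathbb{E}\|\widehat{\nabla}f(x)-\mathbb{E}[\widehat{\nabla}f(x)]\|^{2}$, the squared bias contributes $O(h^{4})$ and the variance contributes $O(1/(\ell h^{2}))$. A second simplification I would use throughout is that the summands $Y_{j}\triangleq\frac{\widehat{f}(x+h\Delta_{j})-\widehat{f}(x-h\Delta_{j})}{2h}\psi(\Delta_{j})$ are independent and identically distributed in $j$, since the $\Delta_{j}$ are i.i.d. and the function evaluations are drawn independently; hence the estimator is an empirical mean of $\ell$ i.i.d. vectors, its expectation equals $\mathbb{E}[Y_{1}]$, and its variance equals $\frac{1}{\ell}$ times that of $Y_{1}$.

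For the bias I would first condition on $\Delta_{1}$ and use unbiasedness $\mathbb{E}[\widehat{f}(y)\mid\Delta_{1}]=f(y)$ to replace the noisy evaluations by $f(x\pm h\Delta_{1})$. A third-order Taylor expansion of $f$ about $x$, valid since $f$ is thrice continuously differentiable, yields $\frac{f(x+h\Delta_{1})-f(x-h\Delta_{1})}{2h}=\nabla f(x)^{\top}\Delta_{1}+R(h,\Delta_{1})$, where the zeroth- and second-order terms cancel by the symmetry of the central difference, and $R$ is the third-order Lagrange remainder, of order $h^{2}$ with a coefficient cubic in the entries of $\Delta_{1}$ (the third derivatives of $f$ being bounded on the compact set $\prod_{i}X_{i}$). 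Taking expectation of the $k$-th component against $1/\delta_{k,1}$, the leading term gives $\sum_{i}\partial_{i}f(x)\,\mathbb{E}[\delta_{i,1}/\delta_{k,1}]=\partial_{k}f(x)$, since $\mathbb{E}[\delta_{i,1}/\delta_{k,1}]$ equals $1$ for $i=k$ and vanishes for $i\neq k$ by mutual independence, zero mean, and finiteness of $\mathbb{E}[1/|\delta_{k,1}|]$. The remainder contributes $\mathbb{E}[R(h,\Delta_{1})/\delta_{k,1}]=O(h^{2})$ once one checks that the mixed moments arising from expanding $R$ are finite; summing over the finitely many components gives the claimed $O(h^{2})$ bias.

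For the variance it suffices to bound $\mathbb{E}\|Y_{1}\|^{2}=\mathbb{E}\big[\frac{(\widehat{f}(x+h\Delta_{1})-\widehat{f}(x-h\Delta_{1}))^{2}}{4h^{2}}\sum_{k}\delta_{k,1}^{-2}\big]$ by $O(1/h^{2})$ and then divide by $\ell$. Conditioning on $\Delta_{1}$ and writing $G=\widehat{f}(x+h\Delta_{1})-\widehat{f}(x-h\Delta_{1})$, I would split $\mathbb{E}[G^{2}\mid\Delta_{1}]$ into the squared mean $(f(x+h\Delta_{1})-f(x-h\Delta_{1}))^{2}=O(h^{2}\|\Delta_{1}\|^{2})$, obtained from a first-order expansion, plus the conditional variance $\sigma_{f}^{2}(x+h\Delta_{1})+\sigma_{f}^{2}(x-h\Delta_{1})$, which is bounded by assumption because the two evaluations are independent. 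Dividing by $4h^{2}$, the squared-mean part contributes $O(1)$ after averaging, while the bounded-variance part contributes the dominant $\frac{1}{4h^{2}}\big(\sigma_{f}^{2}(x+h\Delta_{1})+\sigma_{f}^{2}(x-h\Delta_{1})\big)$; multiplying by $\sum_{k}\delta_{k,1}^{-2}$ and taking expectation over $\Delta_{1}$ then gives $\mathbb{E}\|Y_{1}\|^{2}=O(1/h^{2})$ and hence the variance bound $O(1/(\ell h^{2}))$.

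The step I expect to be most delicate is the remainder control in the bias argument: the Lagrange point in the third-order expansion depends on the random direction $\Delta_{1}$, so $R$ is a random tensor contracted three times against $\Delta_{1}$ and then divided by $\delta_{k,1}$, and I must verify that the resulting mixed moments are integrable rather than merely that individual factors have finite moments. The clean way to handle this is to bound the third derivatives uniformly by a constant on $\prod_{i}X_{i}$, reducing the task to bounding $\mathbb{E}[|\delta_{a,1}\delta_{b,1}\delta_{c,1}|/|\delta_{k,1}|]$ for each index triple; these factor across independent coordinates and are finite under the stated conditions (and are constants in the Rademacher case). The only other point requiring care is that the variance bound invokes finiteness of $\mathbb{E}[\delta_{k,1}^{-2}]$, a mild strengthening of the stated $\mathbb{E}[1/|\delta_{k,1}|]<\infty$ that holds automatically for the recommended Rademacher perturbations.
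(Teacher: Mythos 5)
Your proposal is correct and follows essentially the same route as the paper's proof: condition on the perturbation vector and use unbiasedness, apply a symmetric third-order Taylor expansion with moment cancellations $\mathbb{E}[\delta_{i,1}/\delta_{k,1}]=\mathbf{1}\{i=k\}$ for the $O(h^2)$ bias, split the second moment into conditional mean squared plus conditional variance for the $O(1/(\ell h^2))$ variance, and combine via the bias--variance decomposition. If anything you are slightly more careful than the paper, which silently omits the factor $\delta_{i,j}^{-2}$ from its conditional-variance term $\mathbb{E}\bigl[(\sigma_f^2(x+h\Delta_j)+\sigma_f^2(x-h\Delta_j))/(4h^2)\bigr]$ (valid only when $\delta_{i,j}^2\equiv 1$, e.g.\ Rademacher), whereas you correctly flag finiteness of $\mathbb{E}[\delta_{k,1}^{-2}]$ as the moment condition actually needed there, along with the integrability of the mixed ratios $\mathbb{E}[|\delta_{a,1}\delta_{b,1}\delta_{c,1}|/|\delta_{k,1}|]$ in the remainder.
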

\begin{proof}
By Taylor's expansion, we have
\begin{eqnarray*}
&&\frac{f(x+h\Delta_{j})-f(x-h\Delta_{j})}{2h\delta_{i,j}}=\sum_{k}\nabla_{k}f(x)\frac{\delta_{k,j}}{\delta_{i,j}}\\
&&+\frac{h^2}{6}\sum_{k_{1},k_{2},k_{3}}\nabla_{k_{1}k_{2}k_{3}}^{3}f(x)\frac{\delta_{k_{1},j}\delta_{k_{2},j}\delta_{k_{3},j}}{\delta_{i,j}}+o(h^2)
\end{eqnarray*}
for any $i=1,\ldots,d$, $j=1,\ldots,\ell$. Thus as $h\to0$, the expected bias of the estimator $\widehat{\nabla}_{i}f(x)$ is
\begin{eqnarray*}
	&&\mathbb{E}[\widehat{\nabla}_{i}f(x)-\nabla_{i} f(x)]\\
	&=&\mathbb{E}\left[\frac{\widehat{f}(x+h\Delta_{j})-\widehat{f}(x-h\Delta_{j})}{2h\delta_{i,j}}\right]-\nabla_{i} f(x)\\
	&=&\mathbb{E}\left[\mathbb{E}\left[\frac{\widehat{f}(x+h\Delta_{j})-\widehat{f}(x-h\Delta_{j})}{2h\delta_{i,j}}\Bigg|\Delta_{j}\right]\right]-\nabla_{i} f(x)\\
	&=&\mathbb{E}\left[\frac{f(x+h\Delta_{j})-f(x-h\Delta_{j})}{2h\delta_{i,j}}\right]-\nabla_{i} f(x)\\
	&=&\frac{h^2}{6}\left(\nabla_{iii}^{3}f(x)+3\sum_{k\neq i}\nabla_{ikk}^{3}f(x)\right)+o(h^2).
\end{eqnarray*}
On the other hand, the variance of the estimator $\widehat{\nabla}f(x)$ is
\begin{eqnarray*}
	&&\mathbb{E}\|\widehat{\nabla}f(x)-\mathbb{E}[\widehat{\nabla}f(x)]\|^{2}\\
	&=&\frac{1}{\ell}\sum_{i=1}^{d}\text{Var}\left[\frac{\widehat{f}(x+h\Delta_{j})-\widehat{f}(x-h\Delta_{j})}{2h\delta_{i,j}}\right]\\
	&=&\frac{1}{\ell}\sum_{i=1}^{d}\left(\text{Var}\left[\mathbb{E}\left[\frac{\widehat{f}(x+h\Delta_{j})-\widehat{f}(x-h\Delta_{j})}{2h\delta_{i,j}}\Bigg|\Delta_{j}\right]\right]\right.\\
	&&\left.+\mathbb{E}\left[\text{Var}\left[\frac{\widehat{f}(x+h\Delta_{j})-\widehat{f}(x-h\Delta_{j})}{2h\delta_{i,j}}\Bigg|\Delta_{j}\right]\right]\right)\\
	&=&\frac{1}{\ell}\sum_{i=1}^{d}\left(\text{Var}\left[\frac{f(x+h\Delta_{j})-f(x-h\Delta_{j})}{2h\delta_{i,j}}\right]\right.\\
	&&\left.+\mathbb{E}\left[\frac{\sigma_{f}^{2}(x+h\Delta_{j})+\sigma_{f}^{2}(x-h\Delta_{j})}{4h^2}\right]\right).
\end{eqnarray*}
Since $f(x)$ is third-order continuously differentiable in compact set $\prod_{i}X_{i}$,  the first and third order derivatives of $f(x)$ at $x\in\prod_{i}X_{i}$ are bounded.
Then we have
\begin{small}
\begin{eqnarray*}
	&&\text{Var}\left[\frac{f(x+h\Delta_{j})-f(x-h\Delta_{j})}{2h\delta_{i,j}}\right]\\
	&\leq&\mathbb{E}\left[\frac{f(x+h\Delta_{j})-f(x-h\Delta_{j})}{2h\delta_{i,j}}\right]^2\\
	&\leq&\left[\sum_{k}\Big|\nabla_{k}f(x)\Big|+\frac{h^2}{6}\sum_{k_{1},k_{2},k_{3}}\Big|\nabla_{k_{1}k_{2}k_{3}}^{3}f(x)\Big|+o(h^2)\right]^2\\
	&<&+\infty.
\end{eqnarray*}
\end{small}By bias-variance decomposition, we have
\begin{eqnarray*}
	&&\mathbb{E}\|\widehat{\nabla}f(x)-\nabla f(x)\|^{2}\\
	&=&\|\mathbb{E}[\widehat{\nabla}f(x)-\nabla f(x)]\|^{2}+\mathbb{E}\|\widehat{\nabla}f(x)-\mathbb{E}[\widehat{\nabla}f(x)]\|^{2}.
\end{eqnarray*}
Therefore as $h\to0$ and $\ell\to\infty$, we have
\begin{small}
\begin{eqnarray*}
	&&\mathbb{E}\|\widehat{\nabla}f(x)-\nabla f(x)]\|^{2}\\
	&=&\sum_{i=1}^{d}\left(\left[\frac{h^2}{6}\left(\nabla_{iii}^{3}f(x)+3\sum_{k\neq i}\nabla_{ikk}^{3}f(x)\right)\right]^2+o(h^4)\right.\\
	&&\left.+\frac{1}{\ell}\mathbb{E}\left[\frac{\sigma_{f}^{2}(x+h\Delta_{j})+\sigma_{f}^{2}(x-h\Delta_{j})}{4h^2}\right]+o\left(\frac{1}{\ell h^2}\right)\right).
\end{eqnarray*}
\end{small}Since $\sigma_{f}^{2}(x)$ for any $x\in\prod_{i}X_{i}$ is bounded, as $h\to0$ and $\ell\to\infty$,
\begin{eqnarray*}
	\mathbb{E}\|\widehat{\nabla}f(x)-\nabla f(x)]\|^{2}=O(h^4)+O(\frac{1}{\ell h^2}).
\end{eqnarray*}
\end{proof}

By simultaneous perturbation, each player $i\in\mathcal{N}$ can estimate the partial derivative of its cost function with respect to its own decision variables by $\widehat{\nabla}_{x_{i}}f_{i}(x_{i},x_{-i})$. Then we define $$\widehat{\phi}(x)\triangleq(\widehat{\nabla}_{x_{1}}f_{1}(x_{1},x_{-1})^\top,\ldots,\widehat{\nabla}_{x_{N}}f_{N}(x_{N},x_{-N})^\top)^{\top}.$$
Since each player cannot observe other players’ perturbations, the partial derivative of its cost function with respect to its rivals' decision variables still cannot be obtained.

\begin{remark}
	The scaling parameter $h$ must be chosen with consideration of controlling the bias and variance that contributes to the mean squared error. As the scaling parameter $h$ increases, bias increases while variance decreases (and vice versa). To minimize the mean squared error, the optimal $h$ turns out to be of order $\ell^{-1/6}$. However, we will show that the optimal $h$ for the mean squared error does not achieve an optimal convergence rate for seeking the Nash equilibrium.
\end{remark}

\begin{remark}~\label{rm}
In~\cite{bravo2018bandit},	they use a single-shot estimate, i.e., $\widehat{\nabla}_{x_{i}}f_{i}(x_{i},x_{-i})=\frac{m_{i}}{h}\widehat{f}_{i}(\tilde{x}_{i},\tilde{x}_{-i})z_{i}$, where random vector $z_{i}$ is uniformly drawn from the unit sphere $\mathbb{S}^{m_{i}}$ and $\tilde{x}_{i}=x_{i}+hz_{i}$. As $h\to 0$, the bias of this estimate diminishes at a rate $O(h)$, whereas the variance of this estimate grows at a rate $O(1/h^{2})$. Therefore, when this estimate and simultaneous perturbation method have the same order of bias, the variance of this estimate is the square of that of simultaneous perturbation method. We will show how errors in gradient estimation affect the convergence rate in searching for the Nash equilibrium.
\end{remark}

\subsection{Mirror Descent}
\begin{figure*}[htbp]
	\centering
	\includegraphics[scale=1.1,clip,keepaspectratio]{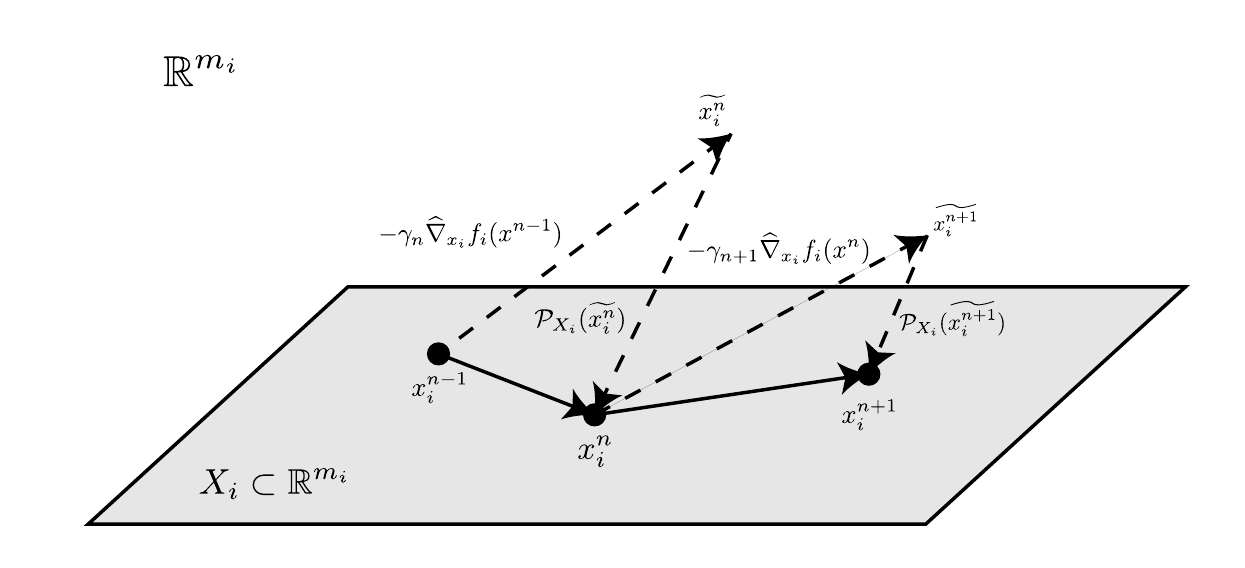}
	\caption{Schematic Illustration of Mirror Descent.}
	\label{fig:MD}
\end{figure*}
We put optimization problem~\eqref{game} into a more general form:
\begin{equation*}
	\forall i \in \mathcal{N}:\ \min_{x_i \in \mathbb{R}^{m_i}} g_i(x_i, x_{-i}) = f_i(x_i, x_{-i}) + P_i(x_i),
\end{equation*}
where $f_i$ is a convex and smooth function and 
\begin{equation*}
	P_i(x_i) = \left \{
	\begin{aligned}
		&0, \quad &x_i \in X_i, \\
		&+\infty, \quad &x_i \notin X_i,
	\end{aligned}
	\right.
\end{equation*}
is a convex but nonsmooth function, induced by the strategy set $X_i$ (convex set constraint).

For such a composite nonsmooth convex minimization problem, we 
propose a mirror descent method with a stochastic gradient estimate to seek the Nash equilibrium. Specifically, suppose that each player $i\in\mathcal{N}$ is endowed with a continuous $\sigma_{i}$-strongly convex function $h_{i}:X_{i}\to\mathbb{R}$, i.e.,
$$h_{i}(x_{i}')\geq h_{i}(x_{i})+\langle\nabla h_{i}(x_{i}),x_{i}'-x_{i}\rangle+\frac{\sigma_{i}}{2}\|x_{i}'-x_{i}\|^{2}$$
for all $x_{i},x_{i}'\in X_{i}$. The Bregman divergence $D_{i}(x_{i},x_{i}')$ of the function $h_{i}$ is defined as the difference between $h_{i}(x_{i})$ and the best linear approximation of $h_{i}(x_{i})$ from $x_{i}'$, i.e.,
$$D_{i}(x_{i},x_{i}')\triangleq h_{i}(x_{i})-h_{i}(x_{i}')-\langle\nabla h_{i}(x_{i}'),x_{i}-x_{i}'\rangle.$$
Then mirror descent scheme starts with an arbitrary $x^0\in \prod_{i}X_{i}$, and proceeds in stage $n$ by the recursion
\begin{eqnarray}
	&x_{i}^{n}=\arg\min&\{\langle -\gamma_{n}\widehat{\nabla}_{x_{i}}f_{i}(x_{i}^{n-1},x_{-i}^{n-1}),x_{i}^{n-1}-x_{i}\rangle\nonumber\\
	&&+D_{i}(x_{i},x_{i}^{n-1}):x_{i}\in X_{i}\},\label{eq:MD}
\end{eqnarray}
where $\gamma_{n}>0$ is a nonincreasing step-size sequence.

In the rest of this paper, we use Euclidean projections as a standard example, i.e., $h(x)=\frac{1}{2}\|x\|^{2}$ and $D(x,x')=\frac{1}{2}\|x-x'\|^{2}$. In this case, our method can be considered as a projected gradient method~\cite{nemirovskij1983problem}
\begin{eqnarray*}
	x_i^n &=& \mathcal{P}_{X_i}(\widetilde{x_{i}^{n}})\\
	 &\triangleq& \arg\min\{\| x_i - \widetilde{x_{i}^{n}} \|^{2}: x_i \in X_i \},
\end{eqnarray*}
where $\widetilde{x_{i}^{n}}\triangleq x_i^{n-1} - \gamma_n \widehat{\nabla}_{x_i} f_i(x^{n-1})$. Intuitively, the main idea of mirror descent method is as follows: at each stage $n$, each player $i$ takes a step from its current strategy $x_{i}^{n}$ along a stochastic gradient estimate $\widehat{\nabla}_{x_i} f_i(x^{n})$, and then mirrors the output $\widetilde{x_{i}^{n+1}}$ back to the feasible space $X_{i}$. See Figure~\ref{fig:MD} for a schematic illustration.

Note that we have $D_{i}(x_{i},x_{i}')=0$ if and only if $x_{i}=x_{i}'$. Thus the convergence of a sequence $\{x_{i}^{n}\}$ to $x_{i}^*$ can be checked by showing that $D_{i}(x_{i}^{n},x_{i}^*)\to0$. The converse that $D_{i}(x_{i}^{n},x_{i}^*)\to0$ when $x_{i}^{n}\to x_{i}^*$ is referred as to ``Bregman reciprocity"~\cite{chen1993convergence}, which is trivially
satisfied by the Euclidean norm.

\subsection{Stochastic Distributed Learning Algorithm}
Based on simultaneous perturbation and mirror descent methods, we propose a stochastic distributed learning algorithm. In period $n$, players perturb their strategies $x^{n-1}\in \prod_{i}X_{i}$ by $\pm\Delta_{j},j=1,\ldots,\ell$, receive observations of their cost functions in $2\ell$ games, and compute the gradient estimate $\widehat{\phi}(x^{n-1})$. At the end of period $n$, each player updates its strategy $x_{i}^{n}$ by mirror descent method. The procedures are summarized in Algorithm~\ref{alg:SDL}.
\begin{algorithm}[h]
	\caption{Stochastic Distributed Learning (SDL)}
	\label{alg:SDL}
	\begin{algorithmic}
		\STATE {\bfseries Input:} $n=1$, $x_{i}^{0}\in X_{i}$ for each $i\in\mathcal{N}$, $(\ell_{n})_{n=1}^{\infty}$, $(h_{n})_{n=1}^{\infty}$, and $(\gamma_{n})_{n=1}^{\infty}$.
		\REPEAT
		\FOR{$j=1$ {\bfseries to} $\ell_{n}$}
		\STATE Generate a perturbation vector $\Delta_{j}$, and observe the stochastic costs $\widehat{f}_{i}(x^{n-1}+h_{n}\Delta_{j})$ and $\widehat{f}_{i}(x^{n-1}-h_{n}\Delta_{j})$, $i\in\mathcal{N}$.		
		\ENDFOR
		\STATE Compute the gradient estimate $\widehat{\phi}(x^{n-1})$ by~\eqref{eq:SP}.
		\STATE Each player $i\in\mathcal{N}$ updates its equilibrium strategy by~\eqref{eq:MD}.
		\UNTIL{stopping criterion is satisfied}
	\end{algorithmic}
\end{algorithm}

\begin{remark}
	In Algorithm~\ref{alg:SDL}, players perform their own computation in a distributed manner. When estimating the partial derivative, each player only uses its own strategy perturbations and cost function evaluations. When applying mirror descent method, the information required by each player includes its current strategy and the partial derivative of its cost function with respect to its own decision variables. It is worth noting that Algorithm~\ref{alg:SDL} can learn the Nash equilibrium without local information exchange.
\end{remark}

The following theorem shows that our proposed learing algorithm converges to the Nash equilibrium in strongly monotone games.
\begin{theorem}\label{thm:NE}
	Suppose Assumptions~\ref{Assump:1} and~\ref{Assump:2} hold. Let $x^{*}$ be the unique Nash equilibrium of a $\beta$-strongly monotone game, and $x^{n}$ be generated by Algorithm~\ref{alg:SDL} with parameters $\gamma_{n}=\gamma/n>0$, $\ell_{n}=\ell_{0}n^{p}>0$, and $h_{n}=h_{0}n^{-(p+1)/4}>0$, where $\gamma$, $\ell_{0}$, and $h_{0}$ are positive constants. Then,\\
	(a) if $0\leq p\leq 1$ and $\gamma>1/\beta$, we have
	$$\mathbb{E}[\|x^{n}-x^{*}\|^2]=O(n^{-(p+1)/2}).$$
	(b) if $p>1$ and $\gamma>1/\beta$, we have
	$$\mathbb{E}[\|x^{n}-x^{*}\|^2]=O(n^{-1}).$$
\end{theorem}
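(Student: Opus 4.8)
The plan is to control the squared error $a_{n}\triangleq\mathbb{E}[\|x^{n}-x^{*}\|^{2}]$ through a single scalar recursion and then solve that recursion with a Chung-type lemma. Because the feasible set factorizes, $X\triangleq\prod_{i}X_{i}$, and we use Euclidean projections, the update reads $x^{n}=\mathcal{P}_{X}(x^{n-1}-\gamma_{n}\widehat{\phi}(x^{n-1}))$, where $\mathcal{P}_{X}$ is the product of the per-player projections and is non-expansive. Since $x^{*}\in X$ is a fixed point of $\mathcal{P}_{X}$, we have $\|\mathcal{P}_{X}(y)-x^{*}\|\le\|y-x^{*}\|$, so
$$\|x^{n}-x^{*}\|^{2}\le\|x^{n-1}-x^{*}\|^{2}-2\gamma_{n}\langle\widehat{\phi}(x^{n-1}),x^{n-1}-x^{*}\rangle+\gamma_{n}^{2}\|\widehat{\phi}(x^{n-1})\|^{2}.$$

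Next I would decompose $\widehat{\phi}(x^{n-1})=\phi(x^{n-1})+b^{n-1}+\varepsilon^{n-1}$ into the true operator, the conditional bias $b^{n-1}=\mathbb{E}[\widehat{\phi}(x^{n-1})\mid x^{n-1}]-\phi(x^{n-1})$, and the conditionally zero-mean noise $\varepsilon^{n-1}$; conditioning on $x^{n-1}$ then annihilates the cross term in $\varepsilon^{n-1}$. For the drift, combining $\beta$-strong monotonicity (Assumption~\ref{Assump:2}) with the variational characterization $\phi(x^{*})^{\top}(x^{n-1}-x^{*})\ge0$ of the NE yields $\langle\phi(x^{n-1}),x^{n-1}-x^{*}\rangle\ge\tfrac{\beta}{2}\|x^{n-1}-x^{*}\|^{2}$. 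The bias cross term is absorbed by Young's inequality, $-2\gamma_{n}\langle b^{n-1},x^{n-1}-x^{*}\rangle\le\eta\gamma_{n}\|x^{n-1}-x^{*}\|^{2}+\tfrac{\gamma_{n}}{\eta}\|b^{n-1}\|^{2}$ for a small $\eta>0$, while the second moment is split as $\mathbb{E}[\|\widehat{\phi}\|^{2}\mid x^{n-1}]=\|\phi(x^{n-1})+b^{n-1}\|^{2}+\mathbb{E}\|\varepsilon^{n-1}\|^{2}$, where $\|\phi\|$ is bounded on the compact set $X$. Applying Theorem~\ref{thm:mse} to each $f_{i}$ then supplies $\|b^{n-1}\|=O(h_{n}^{2})$ and $\mathbb{E}\|\varepsilon^{n-1}\|^{2}=O(1/(\ell_{n}h_{n}^{2}))$.

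Taking total expectation gives $a_{n}\le(1-\gamma_{n}(\beta-\eta)+O(\gamma_{n}^{2}))a_{n-1}+e_{n}$ with error $e_{n}=\tfrac{\gamma_{n}}{\eta}O(h_{n}^{4})+\gamma_{n}^{2}O(1)+\gamma_{n}^{2}O(1/(\ell_{n}h_{n}^{2}))$. Substituting $\gamma_{n}=\gamma/n$, $\ell_{n}=\ell_{0}n^{p}$, $h_{n}=h_{0}n^{-(p+1)/4}$ produces a contraction factor $1-c/n$ with $c=\gamma(\beta-\eta)$ and $e_{n}=O(n^{-(p+2)})+O(n^{-2})+O(n^{-(p+3)/2})$. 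Since $\gamma>1/\beta$, taking $\eta$ small enough makes $c>1$. Invoking the standard recursion lemma (if $a_{n}\le(1-c/n)a_{n-1}+O(n^{-q})$ with $c>q-1>0$, then $a_{n}=O(n^{-(q-1)})$): for $0\le p\le1$ the variance term $O(n^{-(p+3)/2})$ dominates, so $q=(p+3)/2$, and since $c>1\ge(p+1)/2$ the lemma gives $a_{n}=O(n^{-(p+1)/2})$, which is (a); for $p>1$ the term $O(n^{-2})$ dominates, so $q=2$, and since $c>1$ the lemma gives $a_{n}=O(n^{-1})$, which is (b).

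The main obstacle is the coupling between the gradient-estimation error and the step-size schedule: the choice $h_{n}=h_{0}n^{-(p+1)/4}$ must be tuned so that the squared bias $O(h_{n}^{4})$ stays negligible against the variance contribution while the variance itself decays exactly at the target rate, and simultaneously the strong-monotonicity drift must beat both. Making the latter precise is what forces $\gamma>1/\beta$: it guarantees a contraction exponent $c>1$ that exceeds the error exponent $q-1$, so the recursion lands on the fast branch $O(n^{-(q-1)})$; with a weaker $\gamma$ the slow branch $O(n^{-c})$ would take over. A secondary point I would verify is that each player's simultaneous-perturbation estimate stays asymptotically unbiased for $\nabla_{x_{i}}f_{i}$ even though the shared perturbation $\Delta_{j}$ also moves the rival coordinates $x_{-i}$; the independence and zero mean of the perturbation entries make the spurious cross terms vanish in expectation, so Theorem~\ref{thm:mse} applies component-wise as stated.
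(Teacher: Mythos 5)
Your proposal is correct and follows essentially the same route as the paper: a one-step descent inequality for $\|x^{n}-x^{*}\|^{2}$ (the paper derives it from the Bregman-divergence bounds of Proposition~\ref{thm:BD}, which for Euclidean projections is exactly your non-expansiveness bound), the decomposition of $\widehat{\phi}$ into true operator, bias, and conditionally zero-mean noise, the strong-monotonicity drift $\langle\phi(x^{n}),x^{n}-x^{*}\rangle\geq\frac{\beta}{2}\|x^{n}-x^{*}\|^{2}$, the rates from Theorem~\ref{thm:mse}, and Chung's lemma (Lemma~\ref{lemma:q}) with the same exponents. The only cosmetic deviations are your use of Young's inequality on the bias cross term (trading the drift $\beta$ for $\beta-\eta$, which still works since $\gamma\beta>1$ strictly) where the paper uses Cauchy--Schwarz with the compact diameter to get $|\langle\mathbb{E}[\widehat{\phi}(x^{n})]-\phi(x^{n}),x^{n}-x^{*}\rangle|=O(n^{-(1+p)/2})$, and your explicit conditioning on $x^{n-1}$ to kill the noise cross term, which the paper does implicitly.
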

\begin{proof}
We first provide the following upper and lower bounds of the Bregman divergence.
\begin{proposition}\label{thm:BD}
	\cite{bravo2018bandit}. Let $h_{i}$ be a $\sigma_{i}$-strongly convex function on $X_{i}$. For $x_{i},p_{i}\in X_{i}$ and $x_{i}^{+}=P_{i,x_{i}}(y_{i})=\arg\min\{\langle y_{i},x_{i}-x_{i}'\rangle+D_{i}(x_{i}',x_{i}):x_{i}'\in X_{i}\}$, we have
	\begin{eqnarray*}
		D_{i}(p_{i},x_{i})&\geq&\frac{\sigma_{i}}{2}\|p_{i}-x_{i}\|^{2},\\
		D_{i}(p_{i},x_{i}^{+})&\leq&D_{i}(p_{i},x_{i})-D_{i}(x_{i}^{+},x_{i})+\langle y_{i},x_{i}^{+}-p_{i}\rangle\\
		&\leq&D_{i}(p_{i},x_{i})+\langle y_{i},x_{i}-p_{i}\rangle+\frac{1}{2\sigma_{i}}\|y_{i}\|^{2}.
	\end{eqnarray*}
\end{proposition}

Applying Proposition~\ref{thm:BD} to Euclidean projections $h_{i}(x_{i})=\frac{1}{2}\|x_{i}\|^2$, we have
$$D_{i}(x_{i}^{*},x_{i}^{n})=\frac{1}{2}\|x_{i}^{n}-x_{i}^{*}\|^2$$
and
\begin{small}
\begin{eqnarray*}
	&&D_{i}(x_{i}^*,x_{i}^{n+1})\\
	&\leq&D_{i}(x_{i}^*,x_{i}^{n})+\langle- \gamma_{n+1}\widehat{\nabla}_{x_{i}}f_{i}(x_{i}^{n},x_{-i}^{n}),x_{i}^{n}-x_{i}^*\rangle\\
	&&+\frac{1}{2\sigma_{i}}\|\gamma_{n+1}\widehat{\nabla}_{x_{i}}f_{i}(x_{i}^{n},x_{-i}^{n})\|^{2}\\
	&=&D_{i}(x_{i}^*,x_{i}^{n})-\gamma_{n+1}\langle\nabla_{x_{i}}f_{i}(x_{i}^{n},x_{-i}^{n}),x_{i}^{n}-x_{i}^*\rangle\\
	&&-\gamma_{n+1}\langle\widehat{\nabla}_{x_{i}}f_{i}(x_{i}^{n},x_{-i}^{n})-\mathbb{E}[\widehat{\nabla}_{x_{i}}f_{i}(x_{i}^{n},x_{-i}^{n})],x_{i}^{n}-x_{i}^*\rangle\\
	&&-\gamma_{n+1}\langle\mathbb{E}[\widehat{\nabla}_{x_{i}}f_{i}(x_{i}^{n},x_{-i}^{n})]-\nabla_{x_{i}}f_{i}(x_{i}^{n},x_{-i}^{n}),x_{i}^{n}-x_{i}^*\rangle\\
	&&+\frac{\gamma_{n+1}^{2}}{2\sigma_{i}}\|\widehat{\nabla}_{x_{i}}f_{i}(x_{i}^{n},x_{-i}^{n})\|^{2}.
\end{eqnarray*}
\end{small}Let $D_{n}=D(x^{*},x^{n})=\sum_{i=1}^{N}D_{i}(x_{i}^{*},x_{i}^{n})$ for a Nash equilibrium $x^*$. Then we have
$$D_{n}=\frac{1}{2}\|x^{n}-x^{*}\|^{2}$$
and
\begin{eqnarray}
	D_{n+1}&\leq&D_{n}-\gamma_{n+1}\langle\phi(x^{n}),x^{n}-x^*\rangle\nonumber\\
	&&-\gamma_{n+1}\langle\widehat{\phi}(x^{n})-\mathbb{E}[\widehat{\phi}(x^{n})],x^{n}-x^*\rangle\nonumber\\
	&&-\gamma_{n+1}\langle\mathbb{E}[\widehat{\phi}(x^{n})]-\phi(x^{n}),x^{n}-x^*\rangle\nonumber\\
	&&+\frac{\gamma_{n+1}^{2}}{2\sigma}\|\widehat{\phi}(x^{n})\|^{2},\label{ineq}
\end{eqnarray}
where $\sigma=\min\{\sigma_{i},i\in\mathcal{N}\}$.

Since the game is $\beta$-strongly monotone and $x^*$ is a NE, we have
\begin{eqnarray*}
	\langle\phi(x^{n}),x^{n}-x^*\rangle&\geq&\langle\phi(x^{n})-\phi(x^*),x^{n}-x^*\rangle\\
	&\geq&\frac{\beta}{2}\|x^{n}-x^*\|^{2}\\
	&=&\beta D_{n}.
\end{eqnarray*}
Then taking expectations of both sides of~\eqref{ineq}, we have
\begin{eqnarray}
	\mathbb{E}[D_{n+1}]&\leq& (1-\beta\gamma_{n+1})\mathbb{E}[D_{n}]\nonumber\\
	&&-\gamma_{n+1}\langle\mathbb{E}[\widehat{\phi}(x^{n})]-\phi(x^{n}),x^{n}-x^*\rangle\nonumber\\
	&&+\frac{\gamma_{n+1}^{2}}{2\sigma}\mathbb{E}\|\widehat{\phi}(x^{n})\|^{2}.\label{ineq2}
\end{eqnarray}
By the proof of Theorem~\ref{thm:mse}, we have
$$\mathbb{E}[\widehat{\phi}(x^{n})]-\phi(x^{n})=O(h_{n+1}^{2})$$
and
$$\mathbb{E}\|\widehat{\phi}(x^{n})\|^{2}=O(1)+O(h_{n+1}^{4})+O\left(\frac{1}{\ell_{n+1} h_{n+1}^{2}}\right).$$

If $\ell_{n}=\ell_{0}n^{p}>0$ with $0\leq p\leq 1$ and $h_{n}=h_{0}n^{-(p+1)/4}>0$, then we have
$$\mathbb{E}[\widehat{\phi}(x^{n})]-\phi(x^{n})=O(n^{-(p+1)/2})$$
and
$$\mathbb{E}\|\widehat{\phi}(x^{n})\|^{2}=O(n^{(1-p)/2}).$$
That is, when $n$ is large enough, there exists $A>0$ and $B>0$ such that
$$|\langle\mathbb{E}[\widehat{\phi}(x^{n})]-\phi(x^{n}),x^{n}-x^*\rangle|\leq An^{-(1+p)/2}$$
and
$$\mathbb{E}\|\widehat{\phi}(x^{n})\|^{2}\leq Bn^{(1-p)/2}.$$
Therefore, given $\gamma_{n}=\gamma/n$, we have 
\begin{gather}
	\mathbb{E}[D_{n+1}]\leq\left(1-\frac{\beta\gamma}{n+1}\right)\mathbb{E}[D_{n}]+\frac{C}{n^{(3+p)/2}},\label{relation1}
\end{gather}
where $C=A\gamma+\frac{\gamma^2 B}{2\sigma}$.

If $\ell_{n}=\ell_{0}n^{p}>0$ with $p>1$ and $h_{n}=h_{0}n^{-(p+1)/4}>0$, then we have
$$\mathbb{E}[\widehat{\phi}(x^{n})]-\phi(x^{n})=O(n^{-(p+1)/2})$$
and
$$\mathbb{E}\|\widehat{\phi}(x^{n})\|^{2}=O(1).$$
That is, when $n$ is large enough, there exists $A'>0$ and $B'>0$ such that
$$|\langle\mathbb{E}[\widehat{\phi}(x^{n})]-\phi(x^{n}),x^{n}-x^*\rangle|\leq A'n^{-(1+p)/2}$$
and
$$\mathbb{E}\|\widehat{\phi}(x^{n})\|^{2}\leq B'.$$
Therefore, given $\gamma_{n}=\gamma/n$, we have 
\begin{gather}
	\mathbb{E}[D_{n+1}]\leq\left(1-\frac{\beta\gamma}{n+1}\right)\mathbb{E}[D_{n}]+\frac{C'}{n^2},\label{relation2}
\end{gather}
where $C'>A'\gamma n^{(1-p)/2}+\frac{\gamma^2 B'}{2\sigma}$.

We consider a more general form of~\eqref{relation1} and~\eqref{relation2}, and use the following lemma to conclude Theorem~\ref{thm:NE}.
\begin{lemma}\label{lemma:q}
	\cite{chung1954stochastic}. $\mathbb{E}[D_{n}], n=1,2,\ldots,$ is a non-negative sequence such that
	\begin{gather}
		\mathbb{E}[D_{n+1}]\leq \left(1-\frac{P}{n+1}\right)\mathbb{E}[D_{n}]+\frac{Q}{n^{1+q}},\label{eq.general}
	\end{gather}
	where $P,Q,q>0$ and $P>q$. Then we have
	$$\mathbb{E}[D_{n}]=O(n^{-q}).$$
\end{lemma}
%Here we give a sketch of the proof of Lemma~\ref{lemma:q}. By multiplying both sides of~\eqref{eq.general} by $(n+1)^{q}$, we have
%\begin{small}
%\begin{eqnarray*}
%	&&(n+1)^{q}\mathbb{E}[D_{n+1}]\\
%	&\leq&(n+1)^{q}\left(1-\frac{P}{n+1}\right)\mathbb{E}[D_{n}]+\frac{Q(n+1)^{q}}{n^{1+q}}\\
%	&=&\left(1+\frac{1}{n}\right)^{q}\left(1-\frac{P}{n+1}\right)\left(n^{q}\mathbb{E}[D_{n}]\right)+\frac{Q(1+1/n)^{q}}{n}.
%\end{eqnarray*}
%\end{small}As $n\to+\infty$, we have
%$$n\left[\left(1+\frac{1}{n}\right)^{q}\left(1-\frac{P}{n+1}\right)-1\right]\to q-P$$
%and
%$$Q(1+1/n)^{q}\to Q.$$
%That is, there exists $n_{\epsilon}>0$ such that we will have
%$$n\left[\left(1+\frac{1}{n}\right)^{q}\left(1-\frac{P}{n+1}\right)-1\right]<q-P+\epsilon, \forall n>n_{\epsilon}$$
%and
%$$Q(1+1/n)^{q}<Q+\epsilon, \forall n>n_{\epsilon}.$$
%Therefore, if
%$$(q-P+\epsilon)\left(n^{q}\mathbb{E}[D_{n}]\right)+(Q+\epsilon)\leq-\epsilon,$$
%we have
%\begin{eqnarray*}
%	&&(n+1)^{q}\mathbb{E}[D_{n+1}]\\
%	&\leq&\left(1+\frac{q-P+\epsilon}{n}\right)\left(n^{q}\mathbb{E}[D_{n}]\right)+\frac{Q+\epsilon}{n}\\
%	&\leq&n^{q}\mathbb{E}[D_{n}]-\frac{\epsilon}{n}.
%\end{eqnarray*}
%That is, when
%$$n^{q}\mathbb{E}[D_{n}]\geq\frac{Q+2\epsilon}{P-q-\epsilon},$$
%$n^{q}\mathbb{E}[D_{n}]$ will decrease at least by $\epsilon/n$ at each step. Since $\sum_{n=1}^{\infty}{1/n}=+\infty$, we in turn have
%$$\limsup_{n\to+\infty} n^{q}\mathbb{E}[D_{n}]\leq\frac{Q+2\epsilon}{P-q-\epsilon}<+\infty.$$
%Therefore, $\mathbb{E}[D_{n}]=O(n^{-q})$.

Thus, if $\ell_{n}=\ell_{0}n^{p}>0$ with $0\leq p\leq 1$ and $h_{n}=h_{0}n^{-(p+1)/4}>0$, applying Lemma~\ref{lemma:q} with $\beta\gamma>1$ and $q=(p+1)/2$, we have
$$\mathbb{E}[\|x^{n}-x^{*}\|^2]=2\mathbb{E}[D_{n}]=O(n^{-(p+1)/2}).$$
If $\ell_{n}=\ell_{0}n^{p}>0$ with $p>1$ and $h_{n}=h_{0}n^{-(p+1)/4}>0$, applying Lemma~\ref{lemma:q} with $\beta\gamma>1$ and $q=1$, we have
$$\mathbb{E}[\|x^{n}-x^{*}\|^2]=2\mathbb{E}[D_{n}]=O(n^{-1}).$$
\end{proof}

The proof of Theorem~\ref{thm:NE} shows how errors in gradient estimation affect the convergence rate in searching for the Nash equilibrium. As shown in equation~\eqref{ineq2}, the bias $\mathbb{E}[\widehat{\phi}(x^{n})]-\phi(x^{n})$ and the second moment $\mathbb{E}\|\widehat{\phi}(x^{n})\|^{2}$ control the convergence rate of $\mathbb{E}[\|x^{n}-x^{*}\|^{2}]$. As the scaling parameter $h$ increases, bias increases while variance decreases (and vice versa). To maximize the convergence rate, the scaling parameter $h$ is determined by optimally balancing the magnitudes of the two error sources. Note that the optimal $h$ shown in Theorem~\ref{thm:NE} is different from that for minimizing the mean squared error of gradient estimate.

Using the single-shot gradient estimate in~\cite{bravo2018bandit}, a Nash equilibrium learning scheme leads to an $O(n^{-1/3})$ convergence rate. In our proposed SDL, the number of games required to estimate gradient is $2\ell$, and $p$ is the order of $\ell$ with respect to $n$. When $p$ is set to 0, SDL uses a constant number of observations to estimate the gradient, and achieves an $O(n^{-1/2})$ convergence rate as shown in Theorem~\ref{thm:NE}. The improvement in convergence rate is due to that SDL has finer control on the bias/variance of gradient estimate. As discussed in Remark~\ref{rm}, when the orders of bias are set the same, the variance of gradient estimate in~\cite{bravo2018bandit} is much higher than that in SDL.

The setting of $p>0$ represents more games can be performed around the current strategy. More observations of the game can reduce the bias and variance of gradient estimate. When $0<p\leq1$, reducing estimation error can improve the convergence rate. Such improvement potential also indicates that stochastic gradient estimate has not been accurate enough for the current mirror descent method. When $p>1$, reducing estimation error will no longer affect the convergence rate. Our proposed SDL at most achieves an $O(n^{-1})$ convergence rate, which is also the convergence rate under the assumption of unbiased, finite mean-square gradient estimate~\cite{bravo2018bandit}.

\section{Numerical Simulations}\label{sec:NS}
In this section, we conduct numerical simulations to validate the performance of the proposed SDL. The learning algorithm in~\cite{bravo2018bandit} which uses the single-shot gradient estimate is also tested as comparison.

We consider the Cournot competition problem in Example~\ref{example}. The capacity constraint of firm $i\in\mathcal{N}$ is $X_{i}\triangleq\{x_{i}\geq0:\mathbf{1}^\top x_{i}=1\}$. Then update~\eqref{eq:MD} becomes
\begin{eqnarray*}
x_{i}^{n}&=&x_{i}^{n-1}+\frac{1}{m_{i}}\mathbf{1}\mathbf{1}^\top\gamma_{n}\widehat{\nabla}_{x_{i}}f_{i}(x_{i}^{n-1},x_{-i}^{n-1})\\
&&-\gamma_{n}\widehat{\nabla}_{x_{i}}f_{i}(x_{i}^{n-1},x_{-i}^{n-1}).
\end{eqnarray*}
We set $N=20$, $m_{i}=m=5, i\in\mathcal{N}$. For each firm $i\in\mathcal{N}$, each $c_{i,j}$ is randomly generated from the uniform distribution $U[3,4]$. For each market $j\in\mathcal{M}$, $a_{j}$ and $b_{j}$ are randomly generated from uniform distributions $U[4,5]$ and $U[0.5,0.55]$, respectively. The random variables $\zeta_{j}$ and $\eta_{i,j}$ are drawn from uniform distributions $U[-a_{j}/8,a_{j}/8]$ and $U[-c_{i,j}/8,c_{i,j}/8]$, respectively. 

We implement all algorithms with $\gamma_{n}=\frac{1}{2n}$, and test the proposed SDL with different settings of $p$. Figure~\ref{fig:result} shows that all the
squared errors $\|x^{n}-x^{*}\|^2$ of three algorithms asymptotically decrease to zero, which implies three algorithms converge to the Nash equilibrium. Our proposed SDL has a faster convergence rate than the learning algorithm using single-shot gradient estimates. In addition, as $p$ increases from 0 to 1, the convergence rate of SDL is further improved. Such improvement could be attributed to that more observations of the game reduce errors in gradient estiamtion.
\begin{figure}[htbp]
	\centering
	\includegraphics[scale=0.24,clip,keepaspectratio]{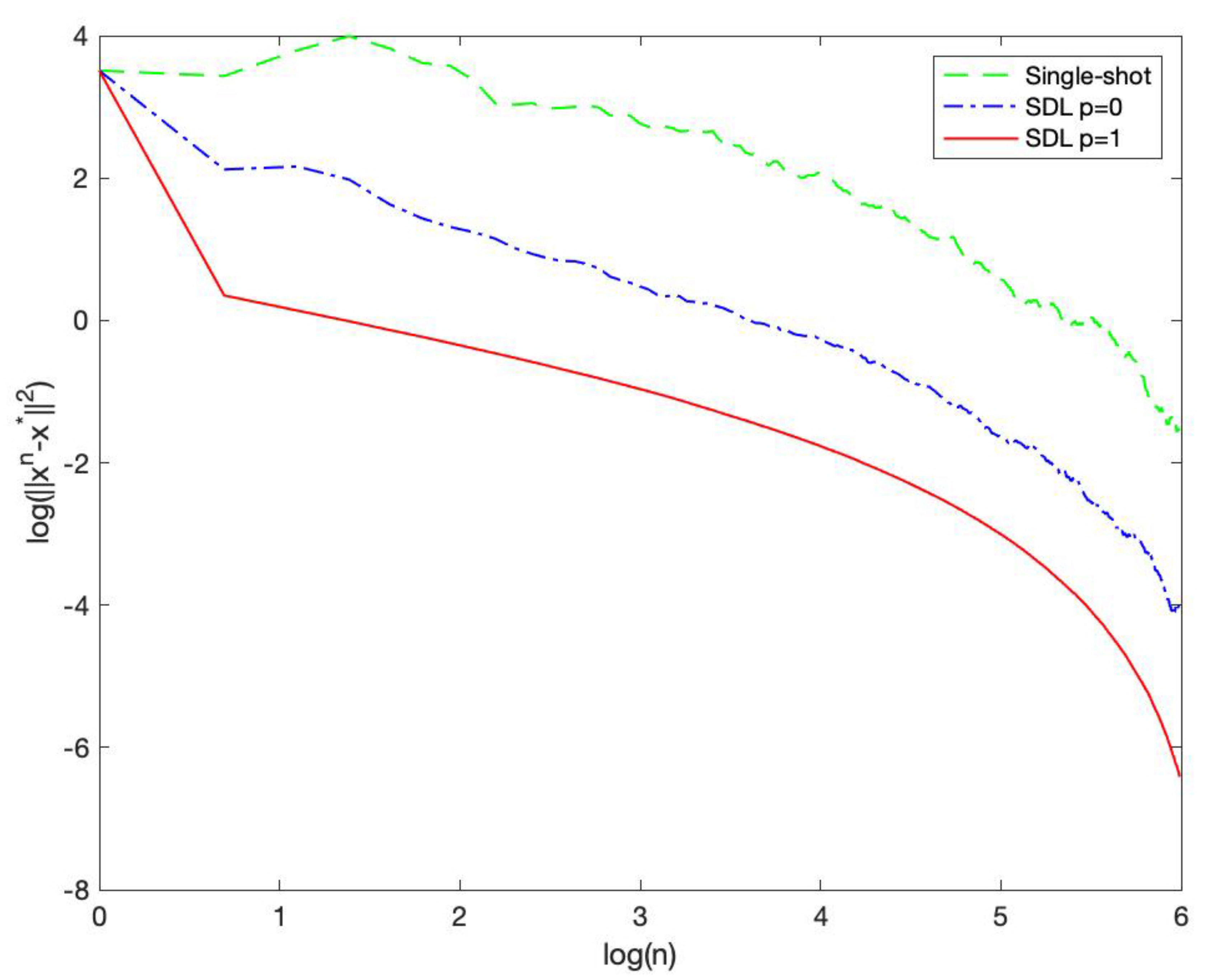}
	\caption{Convergence to Nash equilibrium.}
	\label{fig:result}
\end{figure}

\section{Conclusions}\label{sec:con}
This paper proposes an efficient distributed learning algorithm for stochastic non-cooperative games without information exchange. The asymptotic analysis shows the superiority of simultaneous perturbation method. For
a class of strictly monotone games, our proposed algorithm is shown to converge in mean square to the Nash equilibrium, which is faster than the existing method. In future work, we could investigate a fully decentralized setting where the players' updates need not be synchronous.

% In the unusual situation where you want a paper to appear in the
% references without citing it in the main text, use \nocite
\bibliographystyle{unsrt}
\bibliography{example_paper}

%%%%%%%%%%%%%%%%%%%%%%%%%%%%%%%%%%%%%%%%%%%%%%%%%%%%%%%%%%%%%%%%%%%%%%%%%%%%%%%
%%%%%%%%%%%%%%%%%%%%%%%%%%%%%%%%%%%%%%%%%%%%%%%%%%%%%%%%%%%%%%%%%%%%%%%%%%%%%%%
% APPENDIX
%%%%%%%%%%%%%%%%%%%%%%%%%%%%%%%%%%%%%%%%%%%%%%%%%%%%%%%%%%%%%%%%%%%%%%%%%%%%%%%
%%%%%%%%%%%%%%%%%%%%%%%%%%%%%%%%%%%%%%%%%%%%%%%%%%%%%%%%%%%%%%%%%%%%%%%%%%%%%%%
%\newpage
%\appendix
%\onecolumn
%\section{You \emph{can} have an appendix here.}
%
%You can have as much text here as you want. The main body must be at most $8$ pages long.
%For the final version, one more page can be added.
%If you want, you can use an appendix like this one, even using the one-column format.
%%%%%%%%%%%%%%%%%%%%%%%%%%%%%%%%%%%%%%%%%%%%%%%%%%%%%%%%%%%%%%%%%%%%%%%%%%%%%%%
%%%%%%%%%%%%%%%%%%%%%%%%%%%%%%%%%%%%%%%%%%%%%%%%%%%%%%%%%%%%%%%%%%%%%%%%%%%%%%%

\end{document}